\documentclass[a4paper,twocolumn,11pt,accepted=2021-05-20]{quantumarticle}
\pdfoutput=1
\usepackage[utf8]{inputenc}
\usepackage[english]{babel}
\usepackage[T1]{fontenc}
\usepackage{amsmath}
\usepackage[numbers,sort&compress]{natbib}
\usepackage{hyperref}
\usepackage{tikz}
\usepackage{lipsum}

\usepackage{graphicx}
\usepackage{amssymb}
\usepackage{amsfonts}
\usepackage{amsthm}
\usepackage{color}
\usepackage{amsfonts}
\usepackage{xcolor}
\usepackage{bbold}
\usepackage{bm} 
\usepackage{bbold}
\usepackage{bm}        
\usepackage[mathscr]{eucal}
\usepackage[normalem]{ulem}
\usepackage[all]{hypcap}
\usepackage{multirow}
\usepackage{nicefrac}
\usepackage[bottom]{footmisc}
\usepackage{url}

\hypersetup{colorlinks=true, linkcolor=blue, citecolor=blue, urlcolor=blue}

\newtheorem{lemma}{\bf{Lemma}}
\newtheorem{theorem}{Theorem}

\begin{document}

\title{Experimental localisation of quantum entanglement through monitored classical mediator}

\author{Soham Pal} \email{soham.pal@students.iiserpune.ac.in}
\affiliation{Department of Physics, Indian Institute of Science Education and Research, Pune 411008, India}

\author{Priya Batra}
\affiliation{Department of Physics, 
	     Indian Institute of Science Education and Research, Pune 411008, India}

\author{Tanjung Krisnanda}
\affiliation{School of Physical and Mathematical Sciences, Nanyang Technological University, Singapore 637371, Singapore}

\author{Tomasz Paterek}
\affiliation{School of Physical and Mathematical Sciences, Nanyang Technological University, Singapore 637371, Singapore}
\affiliation{MajuLab, International Joint Research Unit UMI 3654, CNRS, Universit\'{e} C\^{o}te d'Azur, Sorbonne Universit\'{e}, National University of Singapore, Nanyang Technological University, Singapore}
\affiliation{Institute of Theoretical Physics and Astrophysics, Faculty of Mathematics, Physics and Informatics, University of Gda\'nsk, 80-308 Gda\'nsk, Poland}

\author{T. S. Mahesh}
\affiliation{Department of Physics, Indian Institute of Science Education and Research, Pune 411008, India}

\maketitle

\begin{abstract}
  Quantum entanglement is a form of correlation between quantum particles that cannot be increased via local operations and classical communication.
  It has therefore been proposed that an increment of quantum entanglement between probes that are interacting solely via a mediator implies non-classicality of the mediator.
  Indeed, under certain assumptions regarding the initial state, entanglement gain between the probes indicates quantum coherence in the mediator.
  Going beyond such assumptions, there exist other initial states which lead to entanglement between the probes via only local interactions with a classical mediator. 
  In this process the initial entanglement between any probe and the rest of the system ``flows through'' the classical mediator and gets localised between the probes.
  Here we theoretically characterise the maximal entanglement gain via classical mediator
  and experimentally demonstrate, using liquid-state NMR spectroscopy, the optimal growth of quantum correlations between two nuclear spin qubits interacting through a mediator qubit in a classical state. We additionally monitor, i.e., dephase, the mediator in order to emphasise its classical character. 
  It is important to note here that no new entanglement is being generated, but rather the correlation already present in the system is being redistributed or localised between the two probe qubits.   Our results indicate the necessity of verifying features of the initial state if entanglement gain between the probes is used as a figure of merit for witnessing non-classical mediator.
  Such methods were proposed to have exemplary applications in quantum optomechanics, quantum biology and quantum gravity.
\end{abstract}

\section{Introduction}
Quantum entanglement is widely recognised as a resource ``as real as energy''~\cite{RevModPhys.81.865}.
Yet, limits on establishing entanglement between remote particles were systematically studied only recently and with surprising results.
Protocols in which the distant particles get entangled by exchanging ancillary particles can establish remote entanglement without ever communicating it, i.e., no entanglement with the ancillaries~\cite{cubitt,exp0,exp1,exp2,exp3}.
It is now understood that entanglement gain in these schemes is not bounded by the communicated entanglement, but rather by communicated quantum discord~\cite{bounds1,bounds2,Streltsov14,Streltsov15,Zuppardo16,Streltsov-review}, 
a form of quantum correlation that persists in many disentangled states~\cite{henderson2001classical,discord,discord-review}.

In another route to producing remote entanglement, the exchange of quantum particles is replaced by continuous interactions of distant systems (probes) with a third object, a mediator.
In this scenario the theory predicts that not only the probes can get entangled without ever entangling the mediator~\cite{cubitt},
but also that they can even get entangled in the absence of any quantum discord between the probes and the mediator~\cite{krisnanda2017}.
This lack of discord is a strong notion of classicality which means that the mediator can be measured at any time without disturbing the whole multipartite system.
It is the same notion as ``classical communication'' in the framework of local operations and classical communication at the core of entanglement theory~\cite{locc}, but generalised to continuous in time interactions.
In practical terms, the probes get entangled even if the mediator is continuously monitored or dephased. 

It is an observation of this effect, for a discrete number of measurements on the mediator, that is reported here together with theoretical characterisation of maximal amount of entanglement that can be established in this way. Moreover, in our experiments the monitoring measurement is the same at all times, which reinforces classicality of the mediator being at all times in one of two distinguishable states (correlated to the probes). Additionally to observing exotic effect of multipartite entanglement our results have practical implications.
The scenario where two objects are coupled via a mediator is common in science. For example, entanglement gain between the probes has been proposed as a witness of quantum mediator in various scenarios~\cite{TanKris}, 
such as spin chains in solid state magnetic compounds~\cite{Kon}, mechanical membranes inside optical cavities~\cite{krisnanda2017}, detection of initial quantum correlations in open-system dynamics~\cite{krisnanda2017},
quantum gravity \cite{krisnanda2017,Bose2017,MV2017} or quantum properties of parts of living organisms coupled to a multi-mode cavity field~\cite{krisnanda_bio}.
Present results emphasise that these methods must verify features of the initial state in order to validate their implications, i.e., non-classicality of the mediating system.

\section{Results}
\subsection{Theoretical example}

The simplest example of the discussed phenomenon involves three qubits (spin-$\frac{1}{2}$ systems) in the following initial state~\cite{krisnanda2017}:
\begin{equation}
\rho_0 = \frac{1}{2} | \psi_+ \rangle \langle \psi_+ | \otimes | + \rangle \langle + | \,  +  \frac{1}{2} | \phi_+ \rangle \langle \phi_+ | \otimes | - \rangle \langle - |,
\label{rho_in}
\end{equation}
where the first two qubits are the probes $A$ and $B$, and the third qubit is the mediator $M$.
Kets $| \pm \rangle$ denote the eigenstates of $\sigma_x^M$ Pauli matrix,
whereas $| \psi^+ \rangle = \frac{1}{\sqrt{2}}(| 01 \rangle + | 10 \rangle)$ and 
$| \phi^+ \rangle = \frac{1}{\sqrt{2}}(| 00 \rangle + | 11 \rangle)$ are the two Bell states.
Since one could dephase the mediator in the $\sigma_x^M$ basis without perturbing the total state, the mediator is said to be in a classical form.
Note also that initially the probes are not entangled as their state is an even mixture of two Bell states~\cite{VP1998}, but the whole tripartite state is initially entangled across partitions $A:MB$ and $AM:B$. This system evolves under Hamiltonian ($\hbar = 1$ throughout the paper):
\begin{equation}
\mathcal{H} = \omega(\sigma_x^A + \sigma_x^B) \otimes \sigma_x^M,
\label{EQ_H_EX}
\end{equation}
where each probe individually interacts with the mediator via a coupling constant $\omega $, but not directly with each other.
It is easy to see that the state of the mediator is stationary and hence it remains classical at all times.
Furthermore, at all times, it is one and the same measurement i.e., dephasing along $\sigma_x^M$ basis, that keeps the total state invariant.
Yet entanglement between the probes increases and they become even maximally entangled~\cite{krisnanda2017}.

At first sight this example might be surprising as it seems that entanglement between the probes is increased by local interactions with a classical mediator, in contradiction to the very definition of entanglement~\cite{RevModPhys.81.865,RMP.91.025001}.
We stress that there is no contradiction as already in the initial state each individual probe is entangled with the rest of the system. One can show that the corresponding entanglement, as quantified by negativity \cite{neg}, is given by $E_{A:MB} = E_{AM:B} = 1/2$, which is the amount of entanglement in maximally entangled state of two qubits.
The subsequent evolution localises this entanglement to the probes. One could also think about this process as continuous in time entanglement distillation, where the local interactions ``read'' the state of the mediator and suitably adapt the dynamics to arrive at entanglement between the probes.
It is the essence of our demonstration that entanglement localisation can be done via the classical mediator even if it is measured or dephased.

\subsection{Optimality}

We show in the Methods section that
a resource behind entanglement localisation via a classical mediator is the amount of initial correlations with the mediator.
The amount of entanglement that can be localised is bounded as follows:
\begin{equation}
\mathcal{E}_{A:B}(t) - \mathcal{E}_{A:B}(0) \le I_{AB:M}(0),
\label{EQ_I_BOUND}
\end{equation}
where $\mathcal{E}_{A:B}$ denotes the relative entropy of entanglement~\cite{REE} and $I_{AB:M}$ is the mutual information~\cite{NC-book}. Let us put this result in perspective and repeat, this time with quantitative statements, what has been known about the resources behind distribution of various correlations.
Consider first two laboratories (operated by Alice and Bob) that exchange a quantum particle $M$. 
Initially, Alice has systems $A$ and $M$, then $M$ is communicated to Bob, who finally stores systems $M$ and $B$ in his laboratory. 
How much can information between the laboratories increase in this process?
As expected, the bound is given by the communicated information and is characterised by the inequality
$I_{A:MB} - I_{AM:B} \le I_{AB:M}$~\cite{bounds2}.
Surprisingly, a similar inequality does not extend to quantum entanglement.
Instead, we have $\mathcal{E}_{A:MB} - \mathcal{E}_{AM:B} \le D_{AB|M}$, i.e., entanglement gain between the laboratories is bounded by the communicated quantum discord~\cite{bounds1,bounds2,Streltsov14,Streltsov15,Zuppardo16,Streltsov-review}. This allows for entanglement distribution via exchange of unentangled particles~\cite{cubitt} since discord can be nonzero in separable states.
Now, if the communication is continuous in time (not exchange of a particle) and we study entanglement between the probes only (not entanglement between the laboratories) Eq.~\ref{EQ_I_BOUND} shows that the initial information with  mediator is the relevant bound.
This empowers entanglement localisation via classical mediator because there exist correlated states without discord.

Summing up, the necessary conditions for an initial state to give rise to entanglement localisation via classical mediator include: (i) correlations to mediator, i.e. $I_{AB:M} > 0$, and (ii) entanglement between a probe and the rest of the system, e.g., $E_{A:MB} > 0$.
The latter is formally recognised in Eq.~(\ref{EQ_REVEALING}) of the Methods section, which proves $E_{A:B}(t) \le E_{A:MB}(0)$.
This holds for any convex entanglement monotone.
It is straightforward to verify that 

the theoretical example presented above achieves both the upper bound in Eq.~\ref{EQ_I_BOUND} and the bound given by the initial entanglement with a probe particle.

\subsection{NMR setup}

We use liquid-state NMR spectroscopy of $^{13}$C, $^1$H and $^{19}$F in dibromofluoromethane dissolved in acetone with linear topology, H - C - F (see Fig.~\ref{FIG_INITIAL}(a)).
Nuclei of hydrogen and fluorine are identified as probes $A$ and $B$, respectively, whereas carbon nucleus is naturally the mediator $M$. Experiments were performed in $500$ MHz Bruker NMR spectrometer at room temperature.
The sample consists of identical and fairly isolated dibromofluoromethane molecules and all the dynamics of the three-qubit system is completed before any significant environmental influences~\cite{NMR1,NMR2,NMR3}.
The longitudinal and transverse relaxation time constants are longer than $2$ s and $0.2$ s, respectively.
The internal Hamiltonian of the three spin system in a frame rotating about the Zeeman field with individual Larmor frequencies reads:
\begin{equation}
\begin{split}
H_0 = \frac{\pi}{2} \Big ( J_{AM} \, \sigma_z^A \otimes \sigma_z^M +  J_{BM} \, \sigma_z^B \otimes \sigma_z^M +\\
 J_{AB} \, \sigma_z^A \otimes \sigma_z^B \Big ),
\end{split}
\label{Mol_ham}
\end{equation} 
with $J_{AM}=224.5$ Hz, $J_{BM}=-310.9$ Hz, and $J_{AB}=49.7 ~\mathrm{Hz}$ being the corresponding coupling constants between the nuclei. The qubits in the molecular system have internal dynamics that directly couples the probes $A$ and $B$. The effects of this coupling must be canceled if we are to claim generation of entanglement between the probes via classical mediator only. Thus, during experiments, to evolve the system under the interaction Hamiltonian $\mathcal{H}$ in Eq.~\ref{EQ_H_EX} we switch off the internal interaction between spins $A$ and $B$ by applying suitable refocusing pulses as will be explained later.

In general, the quantum state of our three-qubit NMR system is of the form  \noindent $(1 - \epsilon)\frac{1}{8} \mathbb{1} + \epsilon \, \rho$, 
where $\frac{1}{8} \mathbb{1}$ describes the background population, $\rho$ is the so-called deviation density matrix and $\epsilon$ is the purity factor, which is in the order of $10^{-5}$. Nevertheless the NMR experiments are sensitive to the deviation density matrix and  from now on whenever we refer to the ``state'' of the system we mean the pseudopure state characterized by the deviation density matrix. Despite the low purity of the NMR qubits, it has been shown that such  systems can still efficiently implement certain quantum information tasks, for which no known efficient classical algorithms exist \cite{morimae2017power} 


Starting from a three qubit thermal equilibrium state of longitudinal magnetization at room temperature, we prepare the state corresponding to $| 00 \rangle \langle 00| \otimes \mathbb{1}/2$, written in the order $ABM$, using a similar pulse sequence as given in \cite{SciDir,Mahesh2013}. 
The initial state $\rho_0$, Eq.~\ref{rho_in}, is then obtained by the succession of gates given in Fig.~\ref{FIG_INITIAL}(b).
All gates are implemented using radio frequency pulses resonant with the nuclei. The open CNOT gate is realised with the help of Krotov optimisation technique~\cite{krotov2008quantum} with fidelity exceeding 0.99 using push-pull optimization of quantum controls \cite{batra2020push}. The fidelity of the produced initial state to the ideal one is more than $97\%$.



Fig.~\ref{FIG_INITIAL}(c) shows the pulses used to realize the interaction Hamiltonian, $\mathcal{H}$ in Eq.~\ref{EQ_H_EX}. In our experiments we have set the strength of the coupling constant $\omega = 1$ rad/s. The solid bars and empty bars represent $\pi/2$ and $\pi$ pulses, respectively. The first half of the pulse sequence evolves the system under the coupling between $M$ and $A$. 
Since we have $\sigma_z \otimes \sigma_z$ coupling in our system to start with (see Eq.~\ref{Mol_ham}), 
the $(\pi/2)_y$ pulses transform the $ z $-basis to $ x $-basis which then evolves under $\sigma_z \otimes \sigma_z$ coupling. 
Since the Hamiltonian in Eq.~\ref{EQ_H_EX} is a sum of two commuting terms, we first evolve the entire system under $\sigma_x^A \otimes \sigma_x^M$
followed by the evolution under $\sigma_x^B \otimes \sigma_x^M$ for the same amount of time, 
i.e, the physical time rescaled by the coupling strengths $J_{AM}$ and $J_{BM}$.
As all the three qubits are coupled to each other, we decouple $B$ during the first half of the evolution by refocusing it using a $\pi$ pulse, as shown in Fig.~\ref{FIG_INITIAL}(c). The net effect is that the system only evolves under $AM$ coupling, whereas $B$ remains unaltered. The same is repeated in the second half of the evolution with $A$ being refocused and the system evolving under $BM$ coupling. 
We repeat the experiment with the same initial conditions and different duration of dynamics in order to illustrate how entanglement accumulates between the probes.
The probes in principle gain maximal entanglement at $\omega t = \pi/8$.
Finally, we obtain the deviation density matrices via full state tomography using eleven detection experiments~\cite{NMR-tomo}.

To make the claim that the mediator $M$ is classical even stronger, we introduce another set of experiments in which we dephase (measure) the mediator qubit in between and at the end of the evolution. 
The pulse sequence implementing the dephasing of $M$ 
is depicted in the orange box of Fig.~\ref{FIG_INITIAL}(c).
In contrast to the previous set of experiments, just after the realization of $\sigma_x^A \otimes \sigma_x^M$ the mediator qubit $M$ is dephased in $x$-basis. 
The selective dephasing of the mediator is achieved by a pair of opposite pulsed-field-gradients (PFGs)  separated by a $\pi$-pulse on the mediator.  The PFGs cancel each other for the probes $A$ and $B$, whereas they add-up for $M$.  A $\pi$-pulse on $M$ is applied to undo the spin-flip caused by the previous $\pi$-pulse.  Finally, measurement along $x$-basis is realized by simply rotating the basis using $(\pi/2)_y$ and $(\pi/2)_{-y}$ pulses as shown.

\begin{figure}[!t]
	\centering
	\includegraphics[trim=4.5cm 0cm 6cm 0cm,clip=true,width=8.5cm]{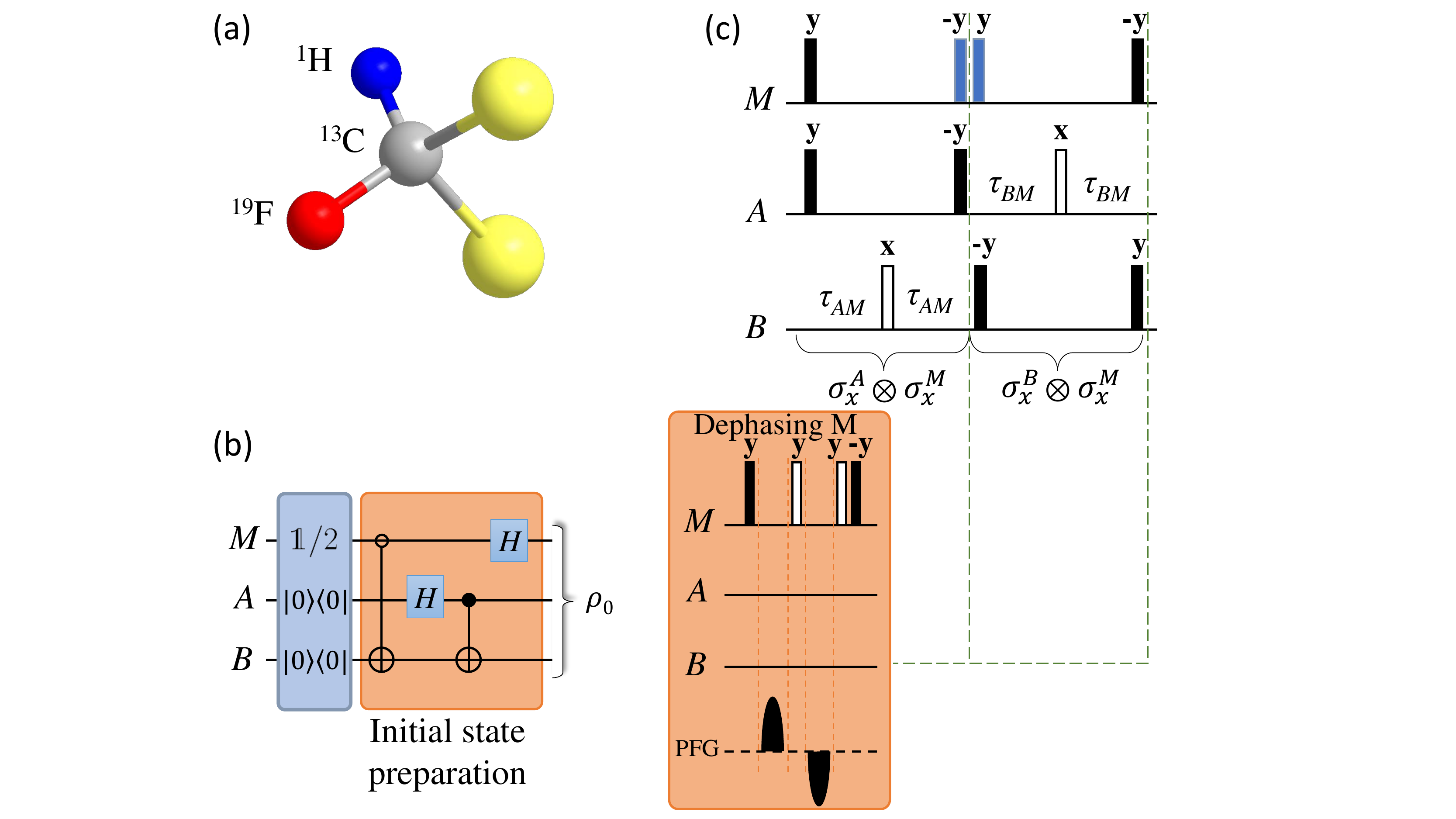}
	\caption{(a) Molecular structure of dibromofluoromethane. We identify $^1$H and $^{19}$F nuclei as probe qubits and $^{13}$C as the mediator qubit. (b) Preparing the initial state as in Eq.~\ref{rho_in} using CNOT and Hadamard gates as shown. 
		(c) Pulse sequences used to evolve the system under the coupling Hamiltonian of Eq.~\ref{EQ_H_EX}. The solid and empty bars represent $\pi/2$ and $\pi$ pulses with phases shown above them.  The blue pulses cancel each other for the no-dephasing case.  Dephasing of $M$ is realized by introducing pulses shown in the orange box in the positions marked by the dashed lines.  Here PFG represents the pulsed-field gradient along $\pm z$ axis.  The delays $\tau_{AM} = 1/(4J_{AM})$ and $\tau_{BM} = 1/(4J_{BM})$.} 
	\label{FIG_INITIAL}
\end{figure}


\section{Discussion}

\begin{figure}
	\centering
	\includegraphics[trim=0.5cm 6.7cm 1.7cm 7.5cm,clip=true,width=9cm]{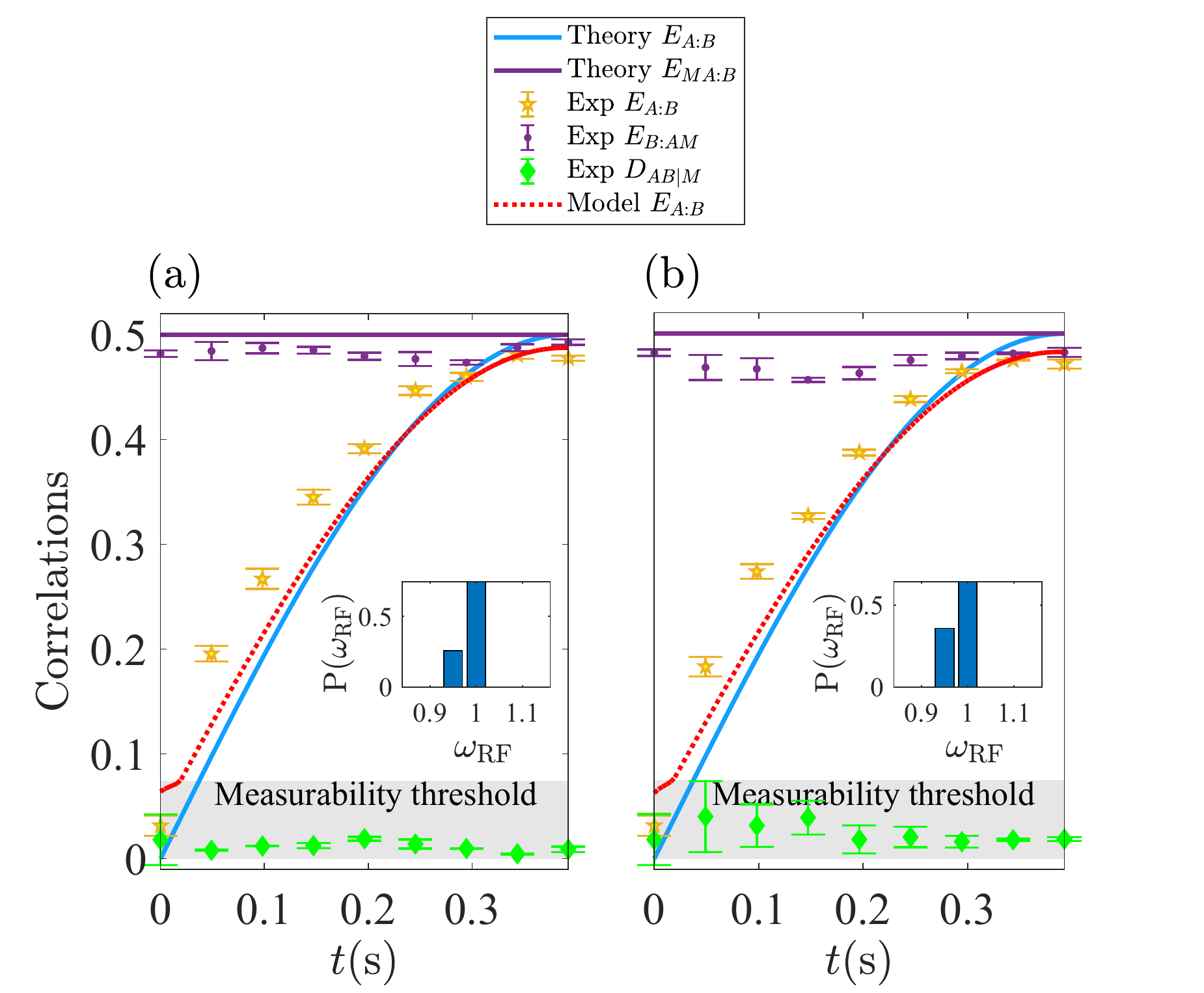}
	\caption{Theoretical (lines) and experimental (markers) correlations: (a) without dephasing and (b) with dephasing the mediator qubit.
		Solid lines show noiseless theoretical predictions: blue for $E_{A:B}$ and purple for $E_{AM:B}$.
		The corresponding experimental data is marked with yellow and purple markers, respectively.
		Green markers show measured discord $D_{AB|M}$, all within experimentally established region of vanishing discord (grey, see main text). The error bars represent the random error in the experiments, obtained from the signal to noise ratio for each experiment. The dotted red lines represent $E_{A:B}$ within a model including inhomogeneity of radio frequency pulses with the distributions $\mathrm{P}(\omega_{\rm RF})$ as shown in the insets.}
	\label{FIG_RESULTS}
\end{figure}

From the experimentally measured three-qubit deviation density matrices we compute various quantum correlations such as discord between the two probes and mediator, $D_{AB|M}$, quantum entanglement between the probes, as measured by the negativity $E_{A:B}$, as well as the negativity $E_{B:AM}$. The quantum discord is calculated following the definition of Ollivier and Zurek~\cite{discord}. 
Recall that discord is not a symmetric quantity and $D_{AB|M}$ denotes discord as measured on the mediator.
It should also be stressed that due to small admixture of the deviation density matrix, the ensemble averaged NMR signals mask genuine entanglement~\cite{Braunstein1999}.
From this perspective one can think of our experiment as NMR simulation of entanglement localisation via classical mediator.

The measured discord and entanglement  are presented in Fig.~\ref{FIG_RESULTS} for datasets without and with dephasing the mediator.
The gray-shaded region represents measurability threshold of discord owing to experimental errors.
This threshold is obtained from measurements of discord for experimental thermal equilibrium state. 
Ideally this state has vanishing discord but experimental imperfections in state tomography give rise to residual values.
The amount of discord $D_{AB|M}$ calculated for evolved deviation density matrices (green data points) all lie well within this experimental precision limit of discord.
We thus conclude that the mediator was classical at all times during the evolution. Yet, negativity of quantum entanglement between the probes $E_{A:B}$ consistently grows
as shown by experimentally estimated values depicted with yellow markers.
The error-bars represent random errors and do not exactly match the noiseless prediction based on Hamiltonian in Eq.~\ref{EQ_H_EX}, presented with a solid blue line.
This is the effect of systematic errors which include the spatial RF inhomogeneity (RFI) over the sample volume, fluctuations in the resonance frequencies due to variations in ambient temperature, as well as finite precision involved in quantum state tomography.
We have simulated the effect of RFI in the evolution as well as tomography pulses, see the dotted red curves,
and indeed observe better agreement with the experimental data.
It is interesting to note that the entanglement between each probe with the rest of the system, $E_{B:AM}$ or $E_{A:BM}$, remains invariant throughout the evolution.
We demonstrated that an increment of quantum entanglement between two probes coupled via a mediator in general does not signify a non-classical mediator.

We note that the process demonstrated here is different from, e.g., entangling two spins via dipole-dipole interaction.
The usual Hamiltonian of the latter directly couples magnetic moments of the two particles and hence it is not surprising that entanglement grows.
In contradistinction, we study tripartite system with an explicit mediator.
Even if the dipole-dipole interaction is rewritten in the form where the mediating virtual photons are clearly distinguished, they typically get entangled with the particles and hence the mediator is not classical.

Perhaps the most interesting application for revealing non-classicality of mediators is witnessing quantum gravity through entanglement between nearby masses~\cite{krisnanda2017,Bose2017,MV2017}. Assuming that the whole mass-gravity-mass system can be described by the standard tripartite Hilbert space formalism, entanglement $E_{A:MB}$ or $E_{AM:B}$ cannot grow via classical mediator, i.e., when the discord $D_{AB|M} = 0$ at all times~\cite{krisnanda2017}. However, the masses alone can become entangled, i.e., $E_{A:B}$ may grow even via classical mediator --- analogously to the phenomenon demonstrated in the present work.
As shown here, this requires initial entanglement $E_{A:MB}$ or $E_{AM:B}$ and correlations to the mediator.
Since in practice one would only measure the masses and not the field, it is highly desirable to provide the bound on possible entanglement gain via classical mediator solely in terms of quantities measured on the masses.
Ref.~\cite{krisnanda2017} proves that the relevant bound is given by the sum of initial entropies of the masses.
Estimations with concrete experimental arrangements show that in order to observe gravitational entanglement the masses need to be cooled down near the ground state of their traps~\cite{Bose2017, Mann2018,arXiv:1812.09776, arXiv1906.08808}.
In such a case the masses are close to a pure state and hence they are initially almost uncorrelated from the rest of the world, i.e., their entropies are small, but not zero. It is this latter bound on entanglement that has to be experimentally violated in order to witness quantum gravity within this framework.

\section{Methods}

We prove here Eq.~\ref{EQ_I_BOUND} of the main text. 
Let us begin with a useful lemma.

\begin{lemma}
	For a tripartite system with classically correlated mediator, i.e, in a state $\rho = \sum_m p_m \: \: \rho_{AB|m} \otimes | m \rangle \langle m |$ with orthonormal basis $\{ | m \rangle \}$, the relative entropy of entanglement follows the bound
	\begin{equation}
	\mathcal{E}_{A:BM} - \mathcal{E}_{A:B} \le I_{AB:M},
	\end{equation}
	where $I_{AB:M}$ is the mutual information between the mediator and remaining systems.
\end{lemma}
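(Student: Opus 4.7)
The plan is to bound $\mathcal{E}_{A:BM}$ from above by a Holevo-like average of per-branch entanglements, and then to control that average by $\mathcal{E}_{A:B}$ plus $I_{AB:M}$ using a concavity argument.

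First I would exploit the block-diagonal structure of $\rho$ on $M$. Given an optimal separable state $\sigma_{AB|m}$ achieving $\mathcal{E}_{A:B}(\rho_{AB|m})$, the combination $\sigma_{ABM}=\sum_m p_m\,\sigma_{AB|m}\otimes|m\rangle\langle m|$ is separable across the $A:BM$ cut (since each conditional piece is $A$--$B$ separable and $|m\rangle\langle m|$ sits on the $BM$ side). A direct computation with $\log \sigma_{ABM}=\sum_m (\log\sigma_{AB|m}+\log p_m)\otimes|m\rangle\langle m|$, combined with $S(\rho)=H(p)+\sum_m p_m S(\rho_{AB|m})$, collapses $S(\rho\|\sigma_{ABM})$ to $\sum_m p_m\,\mathcal{E}_{A:B}(\rho_{AB|m})$. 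This gives the branch-averaged upper bound
\begin{equation}
\mathcal{E}_{A:BM}(\rho)\le\sum_m p_m\,\mathcal{E}_{A:B}(\rho_{AB|m}).
\end{equation}

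Next I would identify the mutual information with a Holevo quantity. Because $\rho$ is classical on $M$, one has $S(\rho_M)=H(p)$ and $S(\rho_{ABM})=H(p)+\sum_m p_m S(\rho_{AB|m})$, so
\begin{equation}
I_{AB:M}=S(\rho_{AB})-\sum_m p_m S(\rho_{AB|m}).
\end{equation}

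The crux is then to show $\sum_m p_m\,\mathcal{E}_{A:B}(\rho_{AB|m})\le\mathcal{E}_{A:B}(\rho_{AB})+I_{AB:M}$, i.e., concavity of the functional $F(\tau):=\mathcal{E}_{A:B}(\tau)+S(\tau)$ on density matrices. The trick I expect to use is to rewrite $F$ as a pointwise minimum of affine functions: since $\mathcal{E}_{A:B}(\tau)=\min_{\sigma\in\mathrm{SEP}}[-S(\tau)-\mathrm{tr}(\tau\log\sigma)]$, one has $F(\tau)=\min_{\sigma\in\mathrm{SEP}}[-\mathrm{tr}(\tau\log\sigma)]$, which is concave because it is the infimum of a family of linear-in-$\tau$ functionals. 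Applying this concavity to the ensemble $\{p_m,\rho_{AB|m}\}$ with average $\rho_{AB}$ yields the desired inequality.

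Finally, chaining the branch-average bound with the concavity inequality and the Holevo identification of $I_{AB:M}$ gives $\mathcal{E}_{A:BM}(\rho)-\mathcal{E}_{A:B}(\rho_{AB})\le I_{AB:M}$. The only subtle step is the concavity of $\mathcal{E}+S$; once the dual/min-of-linear representation is in hand, everything else is an accounting exercise on the block-diagonal structure, so I expect no real obstruction beyond being careful that the separability of the $\sigma_{AB|m}$ across $A$--$B$ lifts to separability of $\sigma_{ABM}$ across $A:BM$ (it does, because appending an ancilla on the $B$ side preserves $A$--$BM$ separability).
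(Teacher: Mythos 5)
Your proof is correct and follows essentially the same route as the paper's: both hinge on the block-diagonal ansatz $\sigma=\sum_m p_m\,\sigma_{AB|m}\otimes|m\rangle\langle m|$, the identity $S(\rho_{ABM})=H(p)+\sum_m p_m S(\rho_{AB|m})$, and the observation that the closest separable state to $\rho_{AB}$ is a feasible (hence suboptimal) choice for each branch $\rho_{AB|m}$ --- your concavity of $\mathcal{E}_{A:B}+S$ via the min-of-linear representation is exactly the paper's final step in disguise. The one genuine difference is that the paper invokes the flags condition of an external reference to get the \emph{equality} $\mathcal{E}_{A:BM}=\sum_m p_m\,\mathcal{E}_{A:B}(\rho_{AB|m})$ together with the exact form of the closest separable state, whereas you only use the trivial direction (any separable state across $A:BM$ upper-bounds the minimum), which suffices for the inequality and makes your argument self-contained. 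Your parenthetical check that $A$--$B$ separability of each $\sigma_{AB|m}$ lifts to $A:BM$ separability of $\sigma_{ABM}$ is the right point to verify and is indeed correct.
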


\begin{proof}
	From the definition of relative entropy of entanglement, we have $\mathcal{E}_{A:BM} = - \mathrm{tr}( \rho \log \sigma) - S_{ABM}$, where $\sigma$ is the closest separable state to $\rho$ and $S_{ABM}$ stands for von Neumann entropy of state $\rho$~\cite{REE}.
	The flags condition~\cite{flags} applied to $\rho$ gives $\mathcal{E}_{A:BM} = \sum_m p_m \: \mathcal{E}_{A:B}(\rho_{AB|m})$.
	Furthermore, the same reference shows that the corresponding closest separable states satisfy
	$\sigma = \sum_m p_m \sigma_{AB|m} \otimes | m \rangle \langle m |$, where $\sigma_{AB|m}$ is the closest separable state to $\rho_{AB|m}$.
	Using expressions for $\sigma$ and $\rho$, we find 
	\begin{eqnarray}
		\mathcal{E}_{A:BM} & = & S_M - S_{ABM} - \nonumber \\
		&&\sum_m p_m \mathrm{tr}(\rho_{AB|m} \log \sigma_{AB|m}) \\
		\mathcal{E}_{A:B} & = & - \sum_m p_m \mathrm{tr}(\rho_{AB|m} \log \sigma_{AB}) - S_{AB},
	\end{eqnarray}
	where $\sigma_{AB}$ is the closest separable state to the marginal $\rho_{AB} = \sum_m p_m \rho_{AB|m}$.
	Using the definition of mutual information, $I_{AB:M} = S_M + S_{AB} - S_{ABM}$, we have
	\begin{eqnarray}
		\mathcal{E}_{A:BM} - \mathcal{E}_{A:B} & = & I_{AB:M} + \nonumber \\
		&&\sum_m p_m [ -\mathrm{tr}(\rho_{AB|m} \log \sigma_{AB|m}) ] \nonumber \\
		& - & \sum_m p_m [ - \mathrm{tr}(\rho_{AB|m} \log \sigma_{AB})].
	\end{eqnarray}
	The lemma follows by noting that the difference between the last two sums is non-positive because each $\sigma_{AB|m}$ minimises the relative entropy of entanglement of the corresponding $\rho_{AB|m}$.
\end{proof}

Eq.~\ref{EQ_I_BOUND} is the result of the following theorem.
\begin{theorem}
	In a tripartite system with classically correlated mediator at all times (each subsystem can be open to its local environment) we have
	\begin{equation}
	\mathcal{E}_{A:B}(t) - \mathcal{E}_{A:B}(0) \le I_{AB:M}(0),
	\end{equation}
	with notation as in the lemma.
\end{theorem}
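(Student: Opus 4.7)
The plan is to chain three ingredients so that
\begin{equation}
\mathcal{E}_{A:B}(t) \le \mathcal{E}_{A:BM}(t) \le \mathcal{E}_{A:BM}(0) \le \mathcal{E}_{A:B}(0) + I_{AB:M}(0).
\end{equation}
The first and third inequalities are essentially free. The first is a direct application of monotonicity of the relative entropy of entanglement: tracing out the mediator is a local operation on the $BM$ side of the $A:BM$ bipartition, so it cannot increase $\mathcal{E}_{A:BM}$. The third is just the preceding Lemma applied to the initial state $\rho(0)$, which by hypothesis is classically correlated on $M$.

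All the real content is in the middle inequality $\mathcal{E}_{A:BM}(t) \le \mathcal{E}_{A:BM}(0)$, and this is where I expect the main obstacle to sit. First I would use the flags condition at each instant $s$ to write $\mathcal{E}_{A:BM}(s) = \sum_m p_m(s)\,\mathcal{E}_{A:B}(\rho_{AB|m}(s))$, exactly as in the Lemma. Next, since by assumption the mediator is classical at every time, the $M$-dephasing $\Phi_M$ in the basis $\{|m\rangle\}$ acts as the identity on $\rho(s)$ for every $s$; I can therefore insert such dephasings arbitrarily densely along the trajectory without altering the open-system dynamics of $\rho$. With this dephasing inserted, the $A$-$M$ and $B$-$M$ couplings become classically controlled by $M$ and split cleanly as local operations on $A$ (resp.\ $B$) conditioned on the outcome $m$, while the local environments of $A$, $B$, $M$ act within a single side of the $A:B$ cut. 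Within each branch this is an LOCC across $A:B$, so $\mathcal{E}_{A:B}(\rho_{AB|m}(s))$ cannot grow; any $M$-side dynamics that shuffles the labels $m \to m'$ merely reshuffles the mixture, and convexity of the relative entropy of entanglement then yields $\sum_m p_m(s)\,\mathcal{E}_{A:B}(\rho_{AB|m}(s)) \le \sum_m p_m(0)\,\mathcal{E}_{A:B}(\rho_{AB|m}(0))$, i.e.\ $\mathcal{E}_{A:BM}(t) \le \mathcal{E}_{A:BM}(0)$.

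Chaining the three inequalities yields $\mathcal{E}_{A:B}(t) - \mathcal{E}_{A:B}(0) \le I_{AB:M}(0)$, as required. The delicate step is the second one: one has to justify carefully that the hypothesis of a classical mediator \emph{at all times} is strong enough to force the effective dynamics, seen from inside each $m$-branch, to factor into local maps on $A$ and on $B$ together with a classical mixing of branches. Once this branchwise LOCC-plus-mixing picture is in place, the flags condition and convexity handle the rest in a routine manner, and the Lemma supplies the final reduction to $I_{AB:M}(0)$.
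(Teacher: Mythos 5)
Your proof follows exactly the same three-step chain as the paper's: monotonicity of $\mathcal{E}$ under discarding $M$ (so $\mathcal{E}_{A:B}(t)\le\mathcal{E}_{A:BM}(t)$), non-growth of $\mathcal{E}_{A:BM}$ through a classical mediator, and the Lemma applied to the initial state. The only difference is that the paper simply cites Ref.~\cite{krisnanda2017} for the middle inequality $\mathcal{E}_{A:BM}(t)\le\mathcal{E}_{A:BM}(0)$, whereas you sketch its proof (inserted dephasings turning each Trotter step into branchwise local operations plus classical mixing), which is indeed the idea behind that reference's argument --- though, as you yourself flag, the sketch would need tightening, e.g.\ the basis in which $M$ is classical may rotate in time, so the dephasings must track it.
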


\begin{proof}
	Consider the following chain of inequalities:
	\begin{eqnarray}
	\mathcal{E}_{A:B}(t) - \mathcal{E}_{A:B}(0) & \le & \mathcal{E}_{A:BM}(t) - \mathcal{E}_{A:B}(0) \\
	& \le & \mathcal{E}_{A:BM}(0) - \mathcal{E}_{A:B}(0) \label{EQ_REVEALING} \\
	& \le & I_{AB:M}(0).
	\end{eqnarray}
	In the first line we used monotonicity of entanglement under tracing out the mediator $M$, 
	i.e, $\mathcal{E}_{A:B}(t) \le \mathcal{E}_{A:BM}(t)$.
	Inequality in Eq.~\ref{EQ_REVEALING} is proven in Ref.~\cite{krisnanda2017} and states that entanglement in partition $A:BM$ (or $B:AM$) cannot grow via classical mediator, 
	i.e, $\mathcal{E}_{A:BM}(t) \le \mathcal{E}_{A:BM}(0)$.
	Lemma 1 confirms the last line.
\end{proof}

\section{Acknowledgments} 
TP acknowledges discussions with Hermann Kampermann. We thank the organisers of QIPA 2018 where this collaboration was initiated. This work is supported by the Singapore Ministry of Education Academic Research Fund Tier 2, Project No. MOE2015-T2-2-034 and Polish National Agency for Academic Exchange NAWA Project No. PPN/PPO/2018/1/00007/U/00001. TK thanks Timothy C. H. Liew for hospitality at Nanyang Technological University. SP and PR acknowledge discussions with V. R. Krithika. TSM acknowledges the support from the Department of Science and Technology, India (Grant Number DST/SJF/PSA-03/2012-13) and the Council of Scientific and Industrial Research, India (Grant Number CSIR-03(1345)/16/EMR-II).

\bibliographystyle{plainnat}
\bibliography{localisation_revised}
\end{document}